\algnewcommand\BlueKeyword{\textcolor{blue}}
\newtheorem{definition}{Definition}
\newtheorem{theorem}{Theorem}
\def\BibTeX{{\rm B\kern-.05em{\sc i\kern-.025em b}\kern-.08em
    T\kern-.1667em\lower.7ex\hbox{E}\kern-.125emX}}
\begin{document}
\title{ Blockchain-assisted Twin Migration for Vehicular Metaverses: A Game Theory Approach}

\author{Yue Zhong, Jinbo Wen, Junhong Zhang, Jiawen Kang*,  Yuna Jiang, Yang Zhang, Yanyu Cheng, Yongju Tong
\thanks{
Y. Zhong, J. Zhang, J. Kang, and Y. Tong are with the Guangdong University of Technology, China (e-mail: 3220001516@mail2.gdut.edu.cn; junhong1013@163.com; kavinkang@gdut.edu.cn; 3221001073@mail2.gdut. edu.cn). J. Wen and Y. Zhang are with the Nanjing University of Aeronautics and Astronautics, China (e-mail: jinbo1608@163.com; yangzhang@nuaa.edu.cn). Y. Jiang is with the Huazhong University of Science and Technology, China (e-mail: yunajiang@hust.edu.cn). Y. Cheng is with the Nanyang Technological University, Singapore (e-mail: yanyu.cheng@ntu.edu.sg).

The work was presented in part at the 5th International Conference on Electronics and Communication, Network and Computer Technology
(\textit{*Corresponding author: Jiawen Kang}).
}
}

\maketitle

\begin{abstract}
As the fusion of automotive industry and metaverse, vehicular metaverses establish a bridge between the physical space and virtual space, providing intelligent transportation services through the integration of various technologies, such as extended reality and real-time rendering technologies, to offer immersive metaverse services for Vehicular Metaverse Users (VMUs). In vehicular metaverses, VMUs update vehicle twins (VTs) deployed in RoadSide Units (RSUs) to obtain metaverse services. However, due to the mobility of vehicles and the limited service coverage of RSUs, VT migration is necessary to ensure continuous immersive experiences for VMUs. This process requires RSUs to contribute resources for enabling efficient migration, which leads to a resource trading problem between RSUs and VMUs. Moreover, a single RSU cannot support large-scale VT migration. To this end, we propose a blockchain-assisted game approach framework for reliable VT migration in vehicular metaverses. Based on the subject logic model, we first calculate the reputation values of RSUs considering the freshness of interaction between RSUs and VMUs. Then, a coalition game based on the reputation values of RSUs is formulated, and RSU coalitions are formed to jointly provide bandwidth resources for reliable and large-scale VT migration. Subsequently, the RSU coalition with the highest utility is selected. Finally, to incentivize VMUs to participate in VT migration, we propose a Stackelberg model between the selected coalition and VMUs. Numerical results demonstrate the reliability and effectiveness of the proposed schemes.
\end{abstract}

\begin{IEEEkeywords}
Metaverse, blockchain, vehicle twins, reputation, coalition game, Stackelberg game.
\end{IEEEkeywords}

\section{Introduction}
The metaverse is a stereoscopic virtual space that exists parallel to the physical space and has recently experienced significant advancements through cutting-edge technologies, such as Artificial Intelligence (AI), eXtended Reality (XR), and blockchain. Vehicular metaverse is defined as a future continuum between automotive industry and metaverse \cite{https://doi.org/10.48550/arxiv.2210.15109}. Vehicle Twins (VTs) that act as a critical component in the vehicular metaverse, are highly accurate and large-scale digital replicas that cover the entire life cycle of vehicles and Vehicular Metaverse Users (VMUs)\cite{Jinbo}. VMUs that consist of drivers and passengers, can access the vehicular metaverse through VTs to enjoy immersive virtual experiences, e.g., AR navigation, virtual games and virtual traveling \cite{10040983}. Virtual traveling specifically involves utilizing interactive technologies like Virtual Reality (VR) and Augmented Reality (AR) to fully immerse individuals in the metaverse's virtual world, allowing for a realistic travel experience. One application of VTs for smart driving is to forecast the collision risks as warning and safety instructions for VMUs \cite{yu2022bi}. To ensure real-time physical-virtual synchronization, vehicles and VMUs continuously update their VTs in virtual spaces by obtaining sensing data from surrounding environments through the use of smart sensors, such as real-time vehicle status and passenger bio-data \cite{xu2022epvisa}.

As computational requirements for building VTs and metaverse services may be unbearable for resource-limited vehicles\cite{9880566}, vehicles offload computation-intensive tasks to nearby edge servers in RoadSide Units (RSUs) that possess sufficient resources \cite{8736823}, such as bandwidth and computing resources\cite{yu2022bi,Jinbo}, and multiple VTs can be deployed in the RSU simultaneously. However, because of the limited service coverage of RSUs and the mobility of vehicles \cite{9124705}, a single RSU cannot continuously provide metaverse services for VMUs, requiring each VT to migrate from the current RSU to another to ensure a seamless immersive experience for VMUs. Considering that the number of VTs will increase sharply with the advent of metaverses\cite{9880566}, there are some challenges for the future development of vehicular metaverses: \textbf{C1)} Some RSUs may misbehave to decrease VT migration efficiency. \textbf{C2)} A single RSU cannot provide sufficient bandwidth resources for VT migration simultaneously. \textbf{C3)} The VMUs may be reluctant to participate in VT migration without a reasonable incentive. Some efforts have been conducted for resource optimization in vehicular metaverses\cite{xu2023generative,9880566}, but they ignore the VT migration problem due to the mobility of vehicles.

To address the above challenges, we first calculate the reputation values of RSUs based on the subjective logic model in this paper. Since blockchain is a distributed technology and can effectively prevent data tampering\cite{wen2022optimal}, we propose a blockchain system to achieve distributed secure reputation management in vehicular metaverses. Then, we formulate a coalition game among RSUs based on the reputation values to select the RSU coalition with the highest utility for reliable and large-scale VT migration. Finally, a Stackelberg model is proposed to incentivize VMUs for VT migration. The main contributions are summarized as follows:

\begin{itemize}
    \item We calculate the reputation values of RSUs based on the freshness of interaction between RSUs and VMUs and propose a blockchain-assisted reputation rating system in vehicular metaverses, where RSUs acting as miners are divided into different levels according to their reputation values for lightweight consensus. (For \textbf{C1})
    \item  We formulate a coalition game based on the reputation values of RSUs for reliable and large-scale VT migration. The reputation values are utilized to evaluate the reliability of RSUs, and a blockchain-assisted reputation rating system is proposed to manage the security of reputation. In the coalition game, RSUs form coalitions to increase their profits. (For \textbf{C2})
    \item To incentivize VMUs for VT migration, we propose a Stackelberg model between the RSU coalition with the highest utility and VMUs, where the RSU coalition acting as the leader determines the bandwidth pricing strategy and VMUs acting as followers determine the bandwidth demand strategies based on the strategy of the RSU coalition as responses to the RSU coalition. (For \textbf{C3})
\end{itemize}

The rest of the paper is described as follows. Section \ref{Related} presents the related works. Section \ref{System} demonstrates the blockchain-assisted game approach framework for VT migration in vehicular metaverses. Section \ref{Coalition} introduces the coalition game-based RSU selection for vehicular metaverses. Section \ref{Stackelberg} introduces the single-leader and multi-follower Stackelberg model between the selected RSU coalition and VMUs. Numerical results are shown in Section \ref{Results}. Section \ref{Conclusion} concludes the paper and elaborates the future work.

\section{Related Works}\label{Related}
Metaverse was first introduced in the fiction named \textit{Snow Crash} in 1992 \cite{stephenson2003snow}. In \cite{681903}, the authors studied the avatars compared to the real-time virtual human research state. Virtual space is a parallel space with physical space, and humans have begun to migrate to virtual spaces on a large scale. With the development of cutting-edge technologies, metaverse has aroused widespread attention. In \cite{encyclopedia2010031, 9667507}, the authors gave a survey and detailed introduction to metaverse, including the technologies, development, applications, and open challenges of metaverse. The vehicular metaverse can be defined as the immersive integration of vehicular communications that merge virtual spaces and real data to create emerging vehicular services for VMUs \cite{luong2022edge}, which has attracted widespread attention from scholars and the automotive industry. The authors in \cite{https://doi.org/10.48550/arxiv.2210.15109    } proposed a new term named \textit{Vetaverse}, which is defined as the future continuum between vehicular industries and metaverse.

The academic discussion on metaverse service optimization focused on two aspects: resource allocation optimization and Quality of Service (QoS) optimization. For resource allocation optimization, the authors in \cite{https://doi.org/10.48550/arxiv.2212.01325  } proposed a resource allocation framework for augmented reality-empowered vehicular metaverses to improve the utility of the system, which is considered from the perspective of resource optimization for metaverse services. The authors in \cite{9838492} proposed a stochastic optimal resource allocation scheme based on random integer programming to minimize the cost of the virtual service provider. For QoS optimization, the authors in \cite{10040983} reconsidered QoS and proposed a framework that simultaneously considered the metaverse system design, the utility of consumers, and the profitability of the Metaverse Service Providers (MSPs). The authors in \cite{9973495} proposed distributed and centralized approaches to study the joint optimization problem of user association and resource pricing for metaverses. Although a lot of work has been done to study the optimization of metaverse services, most of the existing works do not consider both resource optimization and QoS optimization in vehicular metaverses nor do they consider the VT migration problem because of the mobility of vehicles. 

With the exponential growth in data volume and value, the evolving metaverse faces service security requirements and challenges \cite{9347706,9184279,9415746}. Blockchain technologies can be utilized to meet the trusted construction, continuous data interaction, and computational needs of the metaverse \cite{9953075}. Blockchain securely stores and shares data through a decentralized system that uses cryptography to ensure security. Transactions are verified through proof of work, which makes the whole process transparent and secure without the need for a central authority \cite{huynh2023blockchain}. In \cite{gadekallu2022blockchain}, the authors discussed how to protect digital contents and data of metaverse users by using blockchain technologies that have the features of decentralization, immutability, and transparency. In \cite{9889723}, the authors proposed a system model that can transparently manage user-identifiable data in the metaverse by using blockchain technologies. In \cite{9860983}, the authors proposed MetaChain, a novel blockchain-based framework that allows MSPs to allocate their resources to meet the needs of metaverse applications and metaverse users efficiently. Many works have been conducted to examine the use of blockchain for user data preservation and resource management in the metaverse. However, the existing works have not yet devised a comprehensive security scheme based on blockchain technology to adequately safeguard the service security of VT migration.

\begin{figure*}[t]
\centering
\includegraphics[width=1\textwidth]{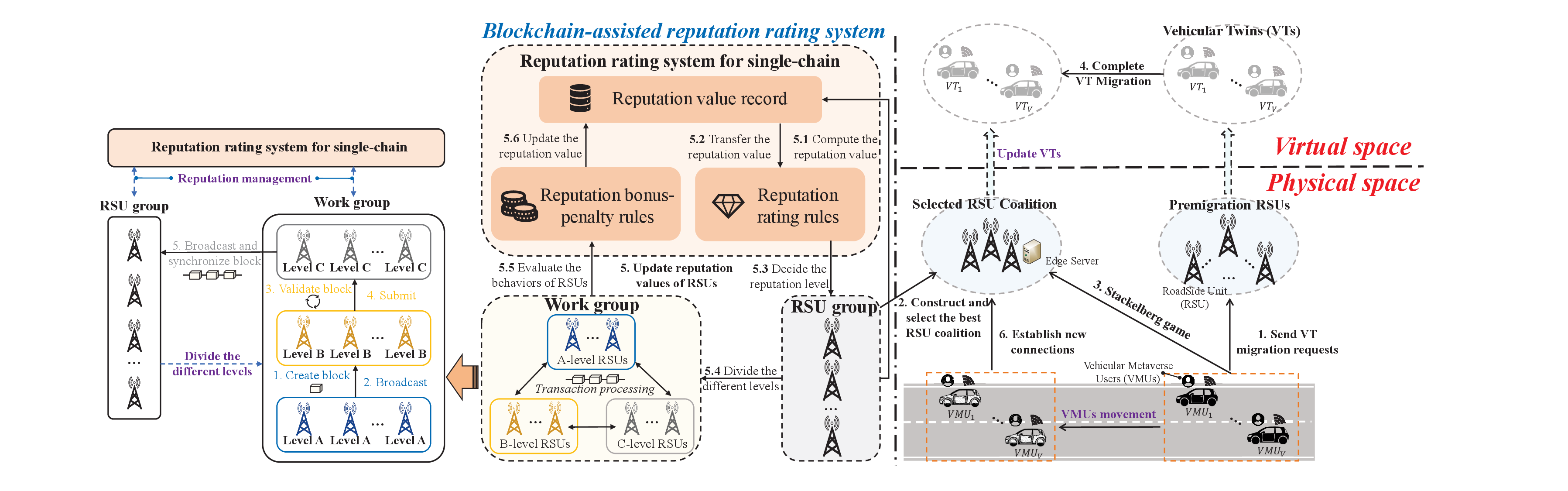}
\caption{A blockchain-assisted game approach framework for reliable VT migration in vehicular metaverses.}
\label{interaction}
\end{figure*}

\section{System Model}\label{System}
Vehicular metaverse mainly includes the physical space, the virtual space, and the interaction layer \cite{9880566,xu2022epvisa}. In the physical space, VMUs are drivers and passengers within vehicles, they can enjoy metaverse services with the application of XR technologies\cite{https://doi.org/10.48550/arxiv.2210.15109}. For example, VMUs can see virtual scenes of the front windshield and side windows through head-mounted displays\cite{9880566}. To ensure seamless immersion and interactions for VMUs, resource-limited vehicles offload the large-scale rendering tasks of updating VTs to the nearby edge servers in RSUs\cite{yu2022bi,Jinbo}, and VTs have to be correspondingly migrated from the current RSUs to other RSUs due to the limited RSU coverage and the mobility of vehicles\cite{Jinbo}. To achieve reliable and efficient VT migration, VMUs purchase sufficient bandwidth resources from well-behaved RSUs\cite{9973495}. Especially, we mathematically calculate the reputations of RSUs to quantify their reliability of RSUs. Then, we propose a blockchain-assisted reputation rating system to manage the reputation values securely. Figure \ref{interaction} shows the blockchain-assisted game approach framework for reliable VT migration in vehicular metaverses. Here, we provide further elaboration on the components of this framework as follows:

\textit{Step 1 (Send VT migration requests): }When VMUs travel on the road, the RSU cannot provide continuous metaverse services for VMUs due to the limited service coverage\cite{yu2022bi,Jinbo}. Therefore, to ensure seamless immersive experiences for VMUs, VTs should be migrated from the current RSUs that they are deployed in (i.e., premigration RSUs) to other RSUs\cite{Jinbo}. Before VT migration begins, VMUs send VT migration requests to the premigration RSUs.

\textit{Step 2 (Construct and select the best RSU coalition): }Upon receiving the VT migration requests, the premigration RSUs broadcast the requests to their surrounding RSUs. Then, the surrounding RSUs first form RSU coalitions based on the reputation values of RSUs. The reputation values are calculated based on the subjective logic model\cite{8832210,9880566}, which are recorded and managed securely on the blockchain. Finally, to ensure reliable and efficient VT migration, the RSU coalition with the highest utility is selected to provide bandwidth resources for VMUs.

\textit{Step 3 (Stackelberg game between the selected RSU coalition and VMUs):} In the VT migration, the selected RSU coalition is the sole bandwidth resource holder and VMUs purchase bandwidth resources from the RSU coalition to migrate VTs from the premigration RSUs to the selected RSU coalition. To maximize the profit of the RSU coalition and maintain its monopoly power, a single-leader and multi-follower Stackelberg model is proposed, which consists of two stages, as shown in Fig. \ref{Stac}. In the first stage, the RSU coalition acting as the leader determines the selling price of unit bandwidth. In the second stage, the VMUs acting as followers determine the amount of bandwidth to purchase based on the pricing strategy of the RSU coalition. This can also be regarded as a resource pricing optimization problem. From the perspective of the selected RSU, it can optimize resource allocation, and from the perspective of VMU, it can optimize service quality.

\textit{Step 4 (Complete VT migration): }Based on the optimal selling price of unit bandwidth decided by the RSU coalition and the optimal amount of bandwidth to purchase decided by VMUs, VTs are migrated from the premigration RSUs to the selected RSU coalition.

\textit{Step 5 (Update reputation values in the blockchain-assisted reputation rating system): } The reputation values of RSUs are updated by the blockchain-assisted reputation rating system. Firstly, based on the reputation values, RSUs (i.e., miners) are proportionally divided into three groups of A, B, and C levels through reputation rules. Then, each group has distinctive responsibilities. Specifically, 1) The A-level RSU group creates a block and broadcasts it to the B-level RSU group. 2) The B-level RSU group validates the block. If the block is legitimate, the B-level RSU group submits the validated block to the C-level RSU group. 3) The C-level RSUs broadcast the block to all RSUs for data synchronization. Finally, the RSUs in the workgroup are rewarded according to their behaviors through reputation bonus-penalty rules, and the reputation values of RSUs participating in the coalition game are updated in the blockchain system. Note that the Practical Byzantine Fault Tolerance (PBFT) consensus algorithm is utilized in the blockchain system for lightweight consensus\cite{li2020scalable}.

\textit{Step 6 (Establish new connections with RSUs of the coalition): }When VTs are migrated to the RSU coalition successfully, VMUs establish new connections with RSUs of the coalition to access metaverse services\cite{Jinbo}, and the RSUs of the coalition will become new premigration RSUs in the next VT migration.

\begin{figure*}[t]
\centering{\includegraphics[width=0.75\textwidth]{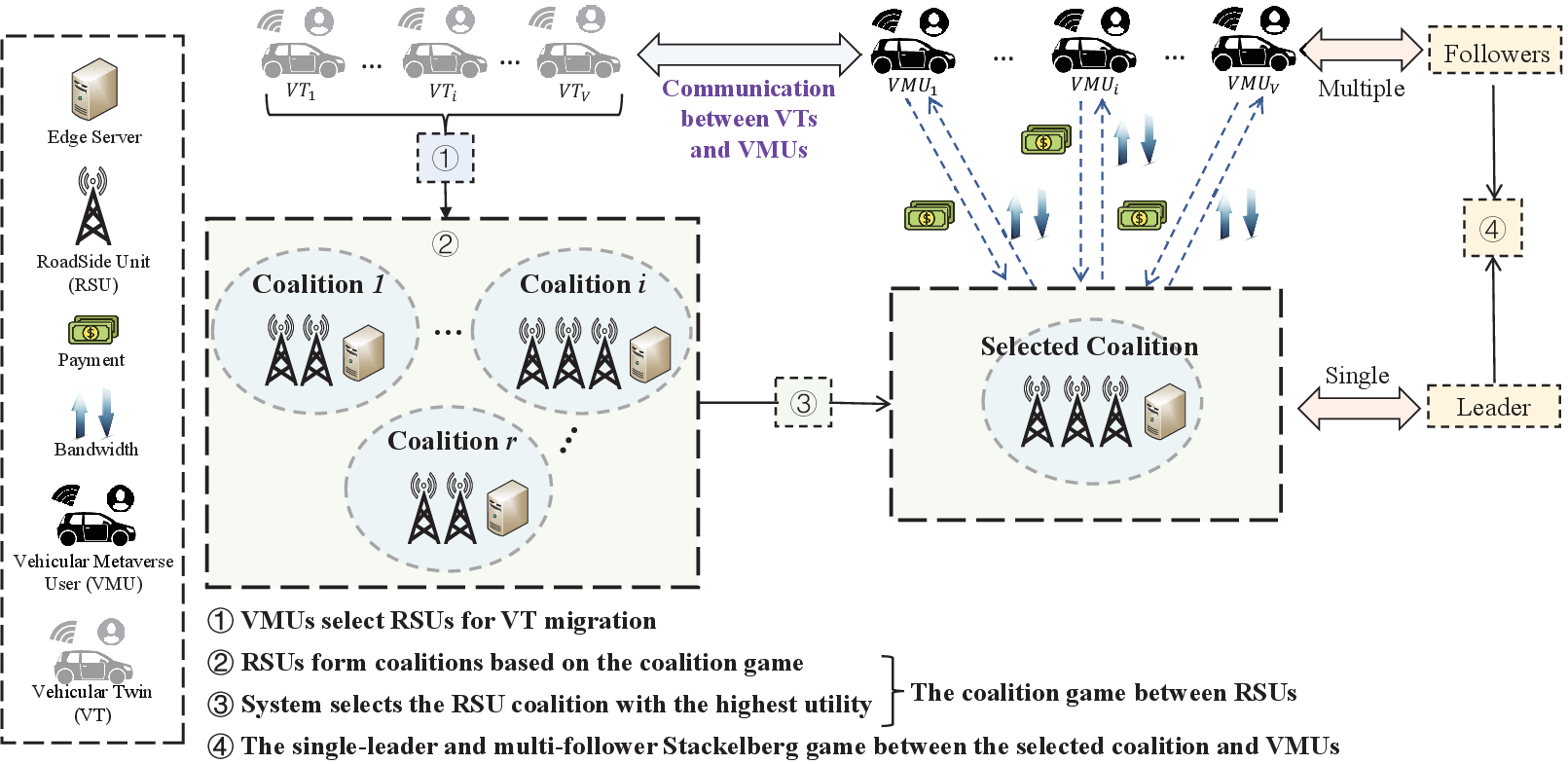}}
\caption{A Stackelberg game between the RSU coalition and VMUs for VT migration.}
\label{Stac}
\end{figure*}

\section{Coalition Game-based RSU Selection for Vehicular Metaverses}\label{Coalition}
\subsection{Subjective Logic Model for Reputation Calculation}
\subsubsection{RSU reputation}
The authors in \cite{josang2006trust} proposed a subjective logic model, the model proposes the concept of fact space and idea space to quantify trust relationships and offers a series of subjective logic operations for trust computation and comprehensive derivation \cite{fu2022bfcri}, and it can quantify trust, doubt and uncertainty, taking into account the credibility of the source of opinion, which is a widely used mathematical tool for reliability modeling \cite{liu2023blockchain}. Therefore, the subjective logic model in computing reputation has been widely utilized \cite{8624307,9829255,8832210}. In this paper, we consider that there is a set of $R$ RSUs, $\Bbb{R}=\{1,\ldots,r,\ldots,R\}$, and a set of $V$ VMUs, $\Bbb{V} = \{1,\ldots,v,\ldots,V\}$. We use a 4-tuple vector $w_{v:r}=\{b_{v:r},d_{v:r},u_{v:r},\alpha_{v:r}\}$ to denote the reputation opinion of VMU $v$ to RSU $r$\cite{9829255}. $b_{v:r}$, $d_{v:r}$, $u_{v:r}$, and $\alpha_{v:r}$ represent belief, disbelief, uncertainty, and the base rate of VMU $v$ toward RSU $r$, respectively, where $b_{v:r},d_{v:r},u_{v:r},\alpha_{v:r}\in{[0,1]}$ and $b_{v:r}+d_{v:r}+u_{v:r}=1$. Belief and disbelief are mapped from the interactions between VMUs and RSUs positively and negatively, respectively \cite{9880566}. The base rate represents the willingness of VMU $v$ to believe RSU $r$, which is an effective uncertainty coefficient of the reputation of RSU $r$ \cite{9880566, 8624307}. We define $R_{v:r}$ as the reputation of VMU $v$ to RSU $r$, which is given by
\begin{equation}
    R_{v:r}=b_{v:r}+\alpha_{v:r}u_{v:r}\label{eq}.
\end{equation}

\subsubsection{Local reputation opinions}
The reputation of RSU $r$ is affected by direct reputation opinions including local reputation opinions\cite{9880566,8832210}. Considering that VMU $v$ may interact with RSU $r$ more than once, the reputation value of VMU $v$ to RSU $r$ is predicted by previous interactions. However, if the interactions between them occurred a long time ago, the reputation value may not have a large effect. Therefore, we define $\tau$ as the effective period for interactions and divide the period $\tau$ into a series of time windows as $\{t_1,\ldots,t_x,\ldots,t_X\}$ \cite{8624307}. The reputation opinion of VMU $v$ to RSU $r$ in the time window $t_x$ is
\begin{equation}
\begin{cases}
b_{v:r}^{t_x}=\frac{\delta_1p_{v:r}^{t_x}}{\delta_1p_{v:r}^{t_x}+\delta_2q_{v:r}^{t_x}+\xi}, \\
d_{v:r}^{t_x}=\frac{\delta_2q_{v:r}^{t_x}}{\delta_1p_{v:r}^{t_x}+\delta_2q_{v:r}^{t_x}+\xi}, \\
u_{v:r}^{t_x}=\frac{\xi}{\delta_1p_{v:r}^{t_x}+\delta_2q_{v:r}^{t_x}+\xi},
\end{cases}
\end{equation}
where $p_{v:r}^{t_x}$ and $q_{v:r}^{t_x}$ are the number of positive and negative interactions between VMU $v$ and RSU $r$ in the time window $t_x$, respectively. $\delta_1$ and $\delta_2$ are weights of the positive interaction and the negative interaction, respectively. $\xi$ is a parameter controlling the rate of uncertainty. To reduce the occurrence of negative interactions \cite{8832210}, we set $\xi=1$, and we ensure that $0<\delta_1\leq\delta_2<1$, satisfying the condition $\delta_1+\delta_2=1$ \cite{7755733}.

The recent interactions with high freshness have a greater impact on the reputation of RSUs than the past interactions. Therefore, it is necessary to consider interaction freshness for the reputation calculation \cite{8832210}. To reflect the influence of time on the reputation calculation, we use $t_{v:r}\in\{T-\tau,T-\tau+1,\ldots,T\}$ to denote the time VMU $v$ interacting with RSU $r$. Then, we use $\mathcal{T}(\cdot)$ to illustrate the degree of reputation attenuation over time, which is given by 
\begin{equation}
    \mathcal{T}(t_{v:r})=\frac{c}{c+\theta(T-t_{v:r})},
\end{equation}
where $\theta\in(0,1)$ is an attenuation coefficient, $c$ is a fixed value that is set to $1$, and $T$ is the current time. By using the time attenuation function to evaluate the reputation, the system can take more into consideration the performance of the RSU in the recent period to identify the reliability of the RSU more accurately. The local reputation opinion of VMU $v$ to RSU $r$ is defined as a vector $\omega_{v:r}^{loc}=\{b_{v:r}^{loc},d_{v:r}^{loc},u_{v:r}^{loc},\alpha_{v:r}^{loc}\}$, which is expressed as \cite{8832210}
\begin{equation}
    \begin{cases}
b_{v:r}^{loc}=\frac{\sum_{t_{v:r}\in\{T-\tau,\ldots,T\}}\mathcal{T}(t_{v:r})b_{v:r}^{t_x}}{\sum_{t_{v:r}\in\{T-\tau,\ldots,T\}}\mathcal{T}(t_{v:r})},\\
d_{v:r}^{loc}=\frac{\sum_{t_{v:r}\in\{T-\tau,\ldots,T\}}\mathcal{T}(t_{v:r})d_{v:r}^{t_x}}{\sum_{t_{v:r}\in\{T-\tau,\ldots,T\}}\mathcal{T}(t_{v:r})}, \\
u_{v:r}^{loc}=\frac{\sum_{t_{v:r}\in\{T-\tau,\ldots,T\}}\mathcal{T}(t_{v:r})u_{v:r}^{t_x}}{\sum_{t_{v:r}\in\{T-\tau,\ldots,T\}}\mathcal{T}(t_{v:r})}, \\
\alpha_{v:r}^{loc}=\alpha_{v:r}.
\end{cases}
\end{equation}
Based on \eqref{eq}, the local reputation of VMU $v$ to RSU $r$ is
\begin{equation}
R_{v:r}^{loc}=b_{v:r}^{loc}+\alpha_{v:r}^{loc}{u_{v:r}^{loc}}.
\end{equation}

\subsubsection{Recommended reputation opinions} In addition to the local reputation opinion of VMU $v$, the reputation value of VMU $v$ to RSU $r$ is also affected by other VMUs acting as recommenders that have interacted with RSU $r$, and their opinions are called recommended reputation opinions\cite{9880566,8832210}. 

We define $\Bbb{M}=\{1,\ldots,m,\ldots,M\} \subset \Bbb{V}$ as a set of recommenders to VMU $v$, meaning that VMU $v$ receives a number of $M$ recommended opinions. The familiarity value between recommender $m$ and RSU $r$ is defined as $F_{m:r}$, which is determined by their interaction frequency. The interaction frequency is the ratio of the number of interactions between recommender $m$ and RSU $r$ to the average number of interactions between recommender $m$ and RSUs \cite{8624307}. Therefore, $F_{m:r}$ is given by
\begin{equation}
    F_{m:r}=\frac{IN_{m:r}}{\overline{IN}_m},
\end{equation}
where $(IN_{m:r}=p_{m:r}+q_{m:r})$ is the interaction numbers between recommender $m$ and RSU $r$ within the interaction period $\tau$, $p_{m:r}$ and $q_{m:r}$ are the number of the positive interactions and the negative interactions between recommender $m$ and RSU $r$, respectively, and $\overline{IN}_m=\frac{\sum_{r\in{\Bbb{R}}}IN_{m:r}}{R}$ \cite{8624307}. Therefore, the reputation of the recommended opinion is $\gamma_{m:r}=\rho_mF_{m:r}$, where $\rho_m\in[0,1]$ is a predefined parameter for reputation calculation \cite{8624307}.

Considering that the recommender's familiarity with RSU $r$ can better use the existing information to reduce the uncertainty value \cite{9829255}, we use $\omega_{m:r}^{rec}=\{b_{m:r}^{rec},d_{m:r}^{rec},u_{m:r}^{rec},\alpha_{m:r}^{rec}\}$ to denote the recommended reputation opinion of VMU $m$ to RSU $r$, where $b_{m:r}^{rec}$, $d_{m:r}^{rec}$, and $u_{m:r}^{rec}$ are given by \cite{8832210}
\begin{equation}
\begin{cases}
b_{m:r}^{rec}=\frac{\sum_{m\in{\Bbb{M}}}\gamma_{m:r}b_{m:r}^{loc}}{\sum_{m\in{\Bbb{M}}}\gamma_{m:r}}, \\
d_{m:r}^{rec}=\frac{\sum_{m\in{\Bbb{M}}}\gamma_{m:r}d_{m:r}^{loc}}{\sum_{m\in{\Bbb{M}}}\gamma_{m:r}}, \\
u_{m:r}^{rec}=\frac{\sum_{m\in{\Bbb{M}}}\gamma_{m:r}u_{m:r}^{loc}}{\sum_{m\in{\Bbb{M}}}\gamma_{m:r}}.
\end{cases}
\end{equation}

\subsubsection{Final reputation opinions}
Based on the above analyses for calculating the local reputation opinion and recommended reputation opinions of RSU $r$, we can further calculate the final reputation opinion of RSU $r$. We use $\omega_{v:r}^{fin}=\{b_{v:r}^{fin},d_{v:r}^{fin},u_{v:r}^{fin},\alpha_{v:r}^{fin}\}$ to denote the final reputation opinion of RSU $r$, which is given by 
\begin{equation}
    \begin{cases}
b_{v:r}^{fin}=\frac{b_{v:r}^{loc}u_{m:r}^{rec}+b_{m:r}^{rec}u_{v:r}^{loc}}{u_{v:r}^{loc}+u_{m:r}^{rec}-u_{v:r}^{loc}u_{m:r}^{rec}}, \\ 
d_{v:r}^{fin}=\frac{d_{v:r}^{loc}u_{m:r}^{rec}+d_{m:r}^{rec}u_{v:r}^{loc}}{u_{v:r}^{loc}+u_{m:r}^{rec}-u_{v:r}^{loc}u_{m:r}^{rec}}, \\
u_{v:r}^{fin}=\frac{u_{v:r}^{loc}u_{m:r}^{rec}}{u_{v:r}^{loc}+u_{m:r}^{rec}-u_{v:r}^{loc}u_{m:r}^{rec}}, \\
\alpha_{v:r}^{fin}=\alpha_{v:r}.
\end{cases}
\end{equation}
Based on \eqref{eq}, the final expectation of the reputation of VMU $v$ to RSU $r$ is expressed as
\begin{equation}
R_{v:r}^{fin}=b_{v:r}^{fin}+\alpha_{v:r}^{fin}{u_{v:r}^{fin}}.
\end{equation}

Without loss of generality, the number of interactions between VMUs and RSUs is set to $0$ on initialization \cite{9880566}. After calculating the final reputation opinions, we can select RSUs with high final reputation values to form coalitions.

\subsection{Coalition Formation Game Formulations}
Due to the limited bandwidth resources of a single RSU, it is not feasible to facilitate simultaneous migration of multiple VTs. As a solution, we propose a coalition game approach for ensuring reliable and large-scale VT migration. In this game, RSUs form coalitions, and the coalition with the highest utility is chosen to allocate bandwidth resources to VMUs. Consequently, this enables the coalition to facilitate the concurrent migration of multiple VTs.

We denote the coalition of RSUs as $\mathcal{G}_o\subseteq{\Bbb{R}}$, and $o$ is an index of the coalition. A group of mutually disjoint coalitions in $\Bbb{R}$ is represented as $\Pi=\{\mathcal{G}_1,\ldots,\mathcal{G}_o,\ldots,\mathcal{G}_O\}$, where $\mathcal{G}_o\neq\mathcal{G}_{o^{'}}$ if $o\neq{o'}$ and $O$ is the number of RSU coalitions \cite{9880566}. Therefore, this coalition game model is made as $\Bbb{G}=\{\Bbb{R},\Pi, \mathcal{U}\}$, where $\mathcal{U}$ represents the utility function of the RSU coalition. The final reputation of RSU $r$ is defined as the average reputation of all VMUs toward RSU $r$, which is expressed as
\begin{equation}
    R^{fin}_r=\frac{\sum_{v=1}^{V}R^{fin}_{v:r}}{V}.
\end{equation}

Because of the traffic volume in some areas (e.g., crossroad areas), RSUs need to be deployed in large numbers, and an edge server may serve multiple RSUs \cite{9124705}. Therefore, we use the RSU node to denote a node composed of an edge server and several RSUs served by it. Then, the RSU nodes construct RSU coalitions. The set of RSU nodes is denoted as $\Bbb{N}=\{1,\ldots,n,\ldots,N\}$. Each RSU has a unique identity number. We denote several RSUs forming an RSU node $n$ as $\Bbb{R}_n=\{\mathcal{R}_n^1,\ldots\mathcal{R}_n^{\omega},\ldots,\mathcal{R}_n^{|\Bbb{R}_n|}\}\subseteq \Bbb{R}$, where $\mathcal{R}_n^{\omega}$ represents the RSU with the identity number $\omega$, which is one of the components of the RSU node. Therefore, the RSU set for the coalition $\mathcal{G}_o$ is denoted as $\cup_{n\in{\mathcal{G}_o}}\Bbb{R}_n$, the number of RSUs in $\cup_{n\in{\mathcal{G}_o}}\Bbb{R}_n$ is denoted as $|\cup_{n\in{\mathcal{G}_o}}\Bbb{R}_n|$, the RSU node set for the coalition $\mathcal{G}_o$ is denoted as $\Bbb{N}_o$, and the number of RSU nodes in $\Bbb{N}_o$ is denoted as $|\Bbb{N}_o|$.

Based on the calculation of RSUs' reputation opinions, the contribution value of the coalition $\mathcal{G}_o$ is expressed as \cite{9880566}
\begin{equation}
\mathcal{Q}(\mathcal{G}_o)=\zeta_{1}\frac{|\cup_{n\in{\mathcal{G}_o}}\Bbb{R}_n|}{R}+\zeta_{2}\frac{\sum_{n\in{\mathcal{G}_o}}\sum_{r=1}^{|\Bbb{R}_n|}R_{r}^{fin}}{|\cup_{n\in{\mathcal{G}_o}}\Bbb{R}_n|},
\end{equation}
where the first part $\frac{|\cup_{n\in{\mathcal{G}_o}}\Bbb{R}_n|}{R}$ is the percentage of RSUs that the coalition $\mathcal{G}_o$ has, and the second part $\frac{\sum_{n\in{\mathcal{G}_o}}\sum_{r=1}^{|\Bbb{R}_n|}R_{r}^{fin}}{|\cup_{n\in{\mathcal{G}_o}}\Bbb{R}_n|}$ is the average reputation value of RSUs in the coalition $\mathcal{G}_o$. $\zeta_1$ and $\zeta_2$ represent the weights of two parts of the contribution value \cite{9880566}.

Since the large latency of VT migration leads to a poor immersive experience for VMUs, in addition to considering the contribution value, we also consider the latency of VT migration for the utility of the coalition. Each RSU joining the coalition can decide the amount of bandwidth provided for VT migration, and the provided bandwidth of the coalition is the sum of the bandwidth provided by RSU nodes in the coalition. We define the provided bandwidth of the coalition as $B$ and the communication rate of the coalition as $R_{t}$. Based on the Shannon theorem \cite{9973495}, $R_t$ is given by 
\begin{equation}
    R_t=B\log_2\bigg(1+\frac{\rho{h^0}d^{-\varepsilon}}{N_0}\bigg),
    \label{Rt}
\end{equation}
where $\rho$, $h^0$, $d$, $\varepsilon$, and $N_0$ represent the transmitter power of the premigration RSU, the unit channel power gain, the average distance between RSUs, the path-loss exponent, and the average noise power, respectively \cite{9973495}. Note that $R_t$ is variable because the total amount of bandwidth $B$ is a variable decided by the RSUs participating in the coalition formation. Therefore, the service latency of the coalition $\mathcal{G}_o$ is  \cite{5289170}
\begin{equation}
    \mathcal{I}(\mathcal{G}_o)=\frac{D\lambda}{R_t},\label{I_t}
\end{equation}
where $D$ is the VT data size of the VMU and $\lambda$ is the data compression ratio.

Since forming an RSU coalition needs negotiation and information exchange between RSU nodes, which may incur communication costs and reduce the utilities of RSU coalitions \cite{9880566,5285181}, the communication cost between RSU nodes is needed to be considered, which is denoted as $\mathcal{C}({\mathcal{G}_o})$. Communication cost calculation should meet two conditions \cite{9880566}. The first condition is that $\mathcal{C}({\mathcal{G}_o})$ increases with the increase of the number of RSU nodes $|\Bbb{N}_o|$. The second condition is that the slope of $\mathcal{C}({\mathcal{G}_o})$ becomes steeper with the increase of $|\Bbb{N}_o|$. Therefore, the communication cost of the coalition $\mathcal{G}_o$ is expressed as \cite{9880566}
\begin{equation}
\mathcal{C}(\mathcal{G}_o)=
\begin{cases}
    -\mathrm{ln}\Big(1-\frac{|\Bbb{N}_o-\epsilon|}{N}\Big),\:|\Bbb{N}_o|\ge2, \\
    0  \hspace{2.2cm},\:\mathrm{otherwise},
\end{cases}
\end{equation}
where $\epsilon$ is set to $0.1$, which is used to avoid an infinite value of $\mathcal{C}(\mathcal{G}_o)$ when $|\Bbb{N}_o|=N$. Based on the contribution value $\mathcal{Q}(\mathcal{G}_o)$, the service latency $\mathcal{I}(\mathcal{G}_o)$, and the communication cost $\mathcal{C}(\mathcal{G}_o)$, the utility function of the coalition $\mathcal{G}_o$ is expressed as 
\begin{equation}
    \mathcal{U}(\mathcal{G}_o)=\mathcal{Q}(\mathcal{G}_o)+\gamma{\mathrm{ln}\bigg(1+\frac{1}{\mathcal{I}(\mathcal{G}_o)}\bigg)}-\sigma{\mathcal{C}(\mathcal{G}_o)}
    \label{eq2},
\end{equation}
where $\gamma$ and $\sigma$ are coefficients that represent the service latency and communication cost, respectively.

\begin{definition}
$\textbf{(Non-Transferable Utility (NTU)):}$ Let $\psi(\cdot)$ is a mapping function such that for every coalition $\mathcal{G}_o\subseteq{\Bbb{R}}$, $\psi(\mathcal{G}_o)$ is a closed convex subset of $\Bbb{R}^{\mathcal{G}_o}$ that contains the utility vectors that RSUs in $\mathcal{G}_o$ can achieve \cite{5607318}.
\end{definition}

For the coalition $\mathcal{G}_o$, whether the RSUs (i.e., the miners) can be rewarded by the blockchain-assisted reputation rating system depends on the coalition utility $\mathcal{U}(\mathcal{G}_o)$\cite{9880566}. If a coalition game is said to be NTU, the utility of a coalition cannot arbitrarily be divided between coalition members\cite{5607318}. Therefore, the RSU coalition selection can be modeled as a coalition formation game with NTU. Each RSU can choose the suitable coalition based on the received utility, and the utility of each RSU in the coalition $\mathcal{G}_o$ is equal to $\mathcal{U}(\mathcal{G}_o)$ instead of a fraction of $\mathcal{U}(\mathcal{G}_o)$\cite{9880566}.

\subsection{Coalition Formation with Merge-and-Split Rules}
\begin{definition}
$\textbf{(Preference operator):}$ A preference operator $\rhd$ is defined for comparing $\Pi_1=\{\mathcal{G}_1^1,\ldots,\mathcal{G}_O^1\}$ and $\Pi_2=\{\mathcal{G}_1^1,\ldots,\mathcal{G}_{O'}^2\}$ that are partitions of the same subset $\Bbb{A}\subseteq\Bbb{R}$ (i.e., same RSUs in $\Pi_1$ and $\Pi_2$). Therefore, $\Pi_1\rhd\Pi_2$ represents that $\Pi_1$ is better than $\Pi_2$ for subset $\Bbb{A}$ \cite{9880566,5285181}.
\end{definition}

The authors in \cite{ding2022coalition} proposed an approach for coalition formation based on merge and split rules. Many orders can be used to compare relationships between partitions, e.g., coalition value orders and individual value orders \cite{5285181}. Individual value orders compare relationships between partitions by using the individual payoff, e.g., Pareto order \cite{5285181}. In this paper, we use the Pareto order to perform the comparison.  

\begin{definition}
    $\textbf{(Pareto order):}$ For two partitions $\Pi_1=\{\mathcal{G}_1^1,\ldots,\mathcal{G}_O^1\}$ and $\Pi_2=\{\mathcal{G}_1^1,\ldots,\mathcal{G}_{O'}^2\}$, the utility of RSU $r$ in $\Pi_1$ and $\Pi_2$ are denoted as $\mathcal{U}_r(\Pi_1)$ and $\mathcal{U}_r(\Pi_2)$, respectively. Then, $\pi_1$ is better than $\pi_2$ with the Pareto order defined as 
\begin{equation}
    \Pi_1\rhd\Pi_2\Longleftrightarrow\{{\mathcal{U}_r(\Pi_1)\ge{\mathcal{U}_r(\Pi_2)}},\:\forall{r}\in\{\Pi_1,\Pi_2\},
\end{equation}
with at least one strict inequality$(>)$ for RSU $r$ \cite{9880566,5607318}.
\end{definition}

For the same RSUs, the partition $\Pi_1$ is preferred over the partition $\Pi_2$ by the Pareto order if at least one RSU can improve its utility when it joins $\Pi_1$ from $\Pi_2$ without reducing the utility of other RSUs. We adopt a coalition formation algorithm based on the Parote order utilized for comparison and the merge-and-split rules \cite{9880566}. Coalition formation requires multiple rounds of merging and splitting, involving all coalitions in each round\cite{9880566}. It is necessary to ensure that the utilities of all coalitions remain stable or increase during the formation process. The merge-and-split rules are defined as follows \cite{9880566,9973643}:
\begin{itemize}
\item \textbf{Merge Rule:} For any set of coalitions $\{\mathcal{G}_1,\ldots,\mathcal{G}_O\}$, merge $\{\mathcal{G}_1,\ldots,\mathcal{G}_O\}$ into $\{\cup_{o=1}^O\mathcal{G}_o\}$, i.e., $\{\cup_{o=1}^O\mathcal{G}_o\}\rhd\{\mathcal{G}_1,\ldots,\mathcal{G}_O\}$, which is denoted as $\{\mathcal{G}_1,\ldots,\mathcal{G}_O\}\rightarrow\{\cup_{o=1}^O\mathcal{G}_o\}$.
\item \textbf{Split Rule:} For any set of coalitions $\{\mathcal{G}_1,\ldots,\mathcal{G}_O\}$, split $\{\cup_{o=1}^O\mathcal{G}_o\}$ into $\{\mathcal{G}_1,\ldots,\mathcal{G}_O\}$, i.e., $\{\mathcal{G}_1,\ldots,\mathcal{G}_O\}\rhd\{\cup_{o=1}^O\mathcal{G}_o\}$, which is denoted as $\{\cup_{o=1}^O\mathcal{G}_o\}\rightarrow\{\mathcal{G}_1,\ldots,\mathcal{G}_O\}$.
\end{itemize}

The merge-and-split rules based on the Pareto order indicate that a coalition will merge only if at least one RSU can improve its utility by merging without reducing the utilities of other RSUs. Similarly, a coalition splits only if at least one RSU can improve its utility by splitting without harming other RSUs, namely reducing the utilities of other RSUs \cite{9880566}. With the merge-and-split rules based on the Pareto order, we propose a coalition algorithm based on \cite{9880566, 5285181} to form RSU coalitions, which consists of three main phases: initialization, adaptive coalition formation, and selection. In the initialization phase, all RSUs are disjoint, and they form initial coalitions. In the adaptive coalition formation phase, the merge-and-split rules based on the Pareto order are utilized to form coalitions by maximizing the utilities of all coalitions. In the selection phase, the RSU coalition with the highest utility is selected. Here are the details of the coalition algorithm.

\begin{algorithm}[t]
\caption{Coalition Formation Algorithm}
\begin{algorithmic}
\Require The RSU set $\Bbb{R}=\{1,\ldots,r,\ldots,R\}$, the final reputation of each RSU $R^{fin}_r,\:r\in{\Bbb{R}}$, and the RSU node set $\Bbb{N}=\{1,\ldots,n,\ldots,N\}$.
\Ensure The RSU coalition with the highest utility.
\State \textit{\textbf{Phase 1 - Initialization}}
\State {Initialize the RSU partition, i.e., each RSU forms a coalition, indicating that all RSUs are disjoint.}   
\State \textit{\textbf{{Phase 2 - Adaptive Coalition Formation}}}
\State {Compute the utility of each coalition based on the utility function \eqref{eq2}.}
\Repeat
\State {Merge mechanism: The coalition $\mathcal{G}_o$ merges into $\mathcal{G}_{o^\prime}$ according to the \textbf{Merge Rule} \cite{9880566,5285181,9973643}.}
\State {Split mechanism: The coalition $\mathcal{G}_o$ splits into $\mathcal{G}_{o^\prime}$ according to the \textbf{Split Rule} \cite{9880566,5285181,9973643}.}
\Until{{Merge-and-split iteration terminates.}}
\State \textit{\textbf{{Phase 3 - Selection}}}
\State {The RSU coalition with the highest utility is selected.}
\end{algorithmic}
\end{algorithm}

The computational complexity of \textbf{Algorithm 1} mainly depends on the merge-and-split process \cite{9880566}. Note that the worst computational complexity is $\mathcal{O}(R^3)$\cite{9880566}. A large coalition is formed after the first merge operation. The split operation is only performed on each RSU coalition, reducing the complexity of the split operation. The result of \textbf{Algorithm 1} is an RSU partition consisting of disjoint independent confederations. The stability of the final coalition partition can be analyzed by the defection function $\Bbb{D}_{hp}$.

\begin{definition}
\textbf{(Defection function $\Bbb{D}_{hp}$): }A partition $\Pi=\{\mathcal{G}_1,\ldots,\mathcal{G}_o,\ldots,\mathcal{G}_O\}$ is $\Bbb{D}_{hp}$-stable if no RSU wants to leave $\Pi$, or when the RSUs which want to leave only can form the partitions allowed by $\Bbb{D}_{hp}$ \cite{9827991,5285181}.    
\end{definition}

If $\Pi=\{\mathcal{G}_1,\ldots,\mathcal{G}_o,\ldots,\mathcal{G}_O\}$ is $\Bbb{D}_{hp}$-stable, two conditions need to be satisfied \cite{5285181}: 
\begin{enumerate}[i)]
\item  For $o\in{\{1,\ldots,O\}}$ and each partition $\{\Bbb{R}_1,\ldots,\Bbb{R}_p\}$ of coalition $\mathcal{G}_o:\{\Bbb{R}_1,\ldots,\Bbb{R}_p\}\ntriangleright\mathcal{G}_o$, where $\ntriangleright$ is the opposite rule of $\vartriangleright$.
\item  For $S\in{\{1,\ldots,O\}}:\bigcup_{o\in{S}}\mathcal{G}_o\ntriangleright\{\mathcal{G}_o|o\in{S}\}$ \cite{5285181}.
\end{enumerate} 
\begin{theorem}
The final partition resulting from our coalition formation algorithm based on merge-and-split rules is $\Bbb{D}_{hp}$-stable.
\end{theorem}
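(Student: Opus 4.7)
The plan is to argue by contradiction, leveraging directly the termination condition of Algorithm 1 together with the two requirements (i) and (ii) that define $\Bbb{D}_{hp}$-stability. Let $\Pi^{*}=\{\mathcal{G}_1,\ldots,\mathcal{G}_O\}$ denote the partition returned by the algorithm, and suppose for contradiction that $\Pi^{*}$ is not $\Bbb{D}_{hp}$-stable. Then either (i) fails, so some coalition $\mathcal{G}_o\in\Pi^{*}$ admits a partition $\{\Bbb{R}_1,\ldots,\Bbb{R}_p\}$ with $\{\Bbb{R}_1,\ldots,\Bbb{R}_p\}\vartriangleright\mathcal{G}_o$ in the Pareto order, or (ii) fails, so there is an index set $S\subseteq\{1,\ldots,O\}$ with $\bigcup_{o\in S}\mathcal{G}_o\vartriangleright\{\mathcal{G}_o\mid o\in S\}$. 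The goal is to show that in either case some further merge or split operation remains admissible at $\Pi^{*}$, contradicting the fact that the \KWRepeat\,$\ldots$\,\KWUntil\ loop has already terminated.

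I would then handle the two cases symmetrically. In case (i), the existence of a Pareto-dominating refinement of $\mathcal{G}_o$ is exactly the hypothesis under which the Split Rule applies to $\mathcal{G}_o$, so the algorithm cannot have exited the loop with $\mathcal{G}_o$ intact; in case (ii), the Pareto domination of $\bigcup_{o\in S}\mathcal{G}_o$ over $\{\mathcal{G}_o\mid o\in S\}$ is precisely the trigger of the Merge Rule on the subset $S$, again contradicting termination. For this step I also need termination itself to be well-defined: since every admissible merge or split strictly increases the utility of at least one RSU without decreasing any other's, and since the number of partitions of the finite set $\Bbb{R}$ is finite, no partition can recur along the iteration, so the loop halts in finitely many steps, and $\Pi^{*}$ is a legitimate terminal state on which to read off the stopping conditions.

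The main obstacle I anticipate is bridging the gap between the pairwise (or single-operation) formulation of the merge-and-split rules used inside the loop and the set-of-coalitions quantifier over arbitrary $S$ appearing in condition (ii) (and analogously the arbitrary-partition quantifier in (i)). The clean way to close this gap is to observe that the Merge Rule in Algorithm 1 is stated for any finite family $\{\mathcal{G}_1,\ldots,\mathcal{G}_O\}$, so an admissible merge on $S$ is itself a single legal step and can be invoked directly; the Split Rule is similarly formulated for an arbitrary refinement. Thus the termination check at the \KWUntil\ line is already quantified over all subsets $S$ and all refinements, and the contradiction is immediate. With this observation the argument reduces to the definitional check sketched above, and the theorem follows.
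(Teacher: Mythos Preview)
Your argument is correct and is exactly the standard contradiction proof for $\Bbb{D}_{hp}$-stability of merge-and-split outputs: termination of the loop means no admissible merge and no admissible split remain, which is verbatim the negation of the two failure modes of conditions (i) and (ii), and your finiteness observation (a strictly Pareto-increasing chain on a finite set of partitions cannot cycle) secures termination. The paper itself does not give a proof at all---it simply writes ``Please refer to \cite{9880566}''---so your proposal reconstructs precisely the argument that the cited source contains, and there is no methodological difference to discuss.
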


\begin{proof}
Please refer to \cite{9880566}.
\end{proof}

\section{Stackelberg Model for Vehicular Metaverses}\label{Stackelberg}
After the coalition game, the RSU coalition with the highest utility will be selected. The upper limit of provided bandwidth of the RSU coalition depends on the amount of bandwidth each RSU node in the coalition is willing to contribute, which is denoted as $B_{max}$. We consider that each VMU has a corresponding VT for managing vehicular applications and VTs would be migrated from the premigration RSUs to the RSU coalition. The RSU coalition can determine the selling price of unit bandwidth and VMUs determine the amount of the purchased bandwidth based on the price unit of bandwidth. We define $B_v$ as the amount of bandwidth that the RSU coalition provides for VMU $v$. The RSU coalition can earn $P$ per unit of bandwidth from each VMU. Simultaneously, the RSU coalition needs to pay the transmission cost of $C$ of unit bandwidth.

The RSU coalition can determine how much VMUs should pay for bandwidth, and based on the bandwidth price, VMUs can determine how much bandwidth they would purchase. Therefore, we formulate a single-leader multi-follower Stackelberg model between the RSU coalition and VMUs, which is denoted as $\mathcal{G}$. In the Stackelberg model, the RSU coalition acting as a leader first declares its strategy, i.e., the price of unit bandwidth. Based on the leader's strategy, VMUs acting as followers would decide their strategies, i.e., the amount of bandwidth requested. The Stackelberg game model is described in detail as follows:

\subsection{VMUs' Bandwidth Service Demands in Stage II}
In this part, we formulate the utility function of the VMU. For VMU $v$, we define $A_v$ as the service latency of the VT migration, where $D_v$ is defined as the data size of the VT. Similar to (\ref{I_t}), the service latency of the VT migration of VMU $v$ is given by
\begin{equation}
A_v=\frac{D_v\lambda}{R_t}=\frac{D_v\lambda}{B_v\log_2\big(1+\frac{\rho{h^0}d^{-\varepsilon}}{N_0}\big)}.
\end{equation}
The higher the bandwidth price or the service latency, the lower profits that VMUs obtain. However, the lower the bandwidth price set by the RSU coalition, the larger the response time of bandwidth it provides, so VMUs need to decide their strategies based on the RSU coalition's strategy. Therefore, the utility function of VMU $v$ is expressed as \cite{9332231}
\begin{equation}
    U_v=\alpha_v{\mathrm{ln}\Big(1+\frac{1}{A_v}\Big)}-PB_v,
    \label{U_v}
\end{equation}
where $\alpha_v\in(0,1)$ is a parameter centered on VMU 
$v$, indicating the sensitivity of VMU $v$ to the service latency of VT migration.

\subsection{RSU Coalition's Selling Price in Stage I}
The RSU coalition acting as a bandwidth resource provider not only ensures that its resource allocation can meet the needs of VMUs but also ensures that its utility can be maximized. To incentivize VTs to be migrated to the RSU coalition and gain as much as possible profits, the RSU coalition formulates an appropriate pricing strategy, indicating that the RSU coalition needs to constantly adjust its pricing strategy according to the bandwidth demands of VMUs to maximize its utility. The RSU coalition can obtain profits by providing bandwidth resources to VMUs but needs to pay the transmission costs of bandwidth resources. Therefore, the problem of maximizing the utility of the RSU coalition is formulated as
\begin{equation}
    \begin{split}
    \textbf{Problem:}\:&\max\limits_{P}\:U_r=\sum_{v=1}^V(P-C)B_v,  \\
    &\:\:s.t.\:\: {0 < \textstyle \sum_{v=1}^{V}}B_v \leq B_{max},\\
    &\quad\:\:\:\:\:\: 0 < B_v,\:\forall v \in \small\{1,\ldots,V\small\},\\
    &\quad\:\:\:\:\:\: 0 < C \leq P \leq P_{max} .
    \end{split}
    \label{U_r}
\end{equation}
where $U_r$ is the utility of the RSU coalition, $B_{max}$ is the total amount of bandwidth that the RSU coalition provides, which has been determined by the selected RSU coalition after the coalition game, and $P_{max}$ is the maximum selling price of unit bandwidth determined by the RSU coalition. Note that no VMU would buy bandwidth resources from the RSU coalition if the selling price of unit bandwidth exceeds $P_{max}$.

\subsection{Stackelberg Equilibrium Analysis}
The Stackelberg equilibrium ensures that the utility of the RSU coalition is maximized, considering that VMUs formulate policies of requesting the amount of bandwidth according to the best response. In this part, we seek the Stackelberg equilibrium, at which the RSU coalition acts as a leader and VMUs act as followers. Both the leader and followers can maximize their utilities by constantly changing their strategies until they reach the optimal strategies in equilibrium. The Stackelberg equilibrium is defined as follows:

\begin{definition}
    \textbf{(Stackelberg Equilibrium):}  Let $\boldsymbol{B}^*=\{B_v^*\},\:v\in{\Bbb{V}}$ and $P^*$ are denoted as the optimal bandwidth demands of VMUs and optimal pricing bandwidth of the RSU coalition, respectively. The strategy $(\boldsymbol{B}^*,P^*)$ is the Stackelberg equilibrium if and only if the following set of inequalities are strictly satisfied \cite{9973495}
\begin{equation}
    \left\{\begin{array}{l}U_r(P^*,\boldsymbol{B}^*)\ge{U_r(P,\boldsymbol{B}^*)},\vspace{0.05in}\\ 
     U_v(B_v^*,\boldsymbol{B}_{-v}^*,P^*)\ge{U_v(B_v,\boldsymbol{B}_{-v}^*,P^*)},\:\forall{v\in\Bbb{V}}.\end{array}\right.
\end{equation}
\end{definition}

In the following, we utilize the backward induction method to analyze the Stackelberg equilibrium \cite{9880566}. 

\subsubsection{VMUs' optimal strategies as equilibrium in Stage II}
In the Stackelberg game, based on the selling price of unit bandwidth $P$, VMUs acting as followers would determine the optimal bandwidth demand strategies to maximize their profits in \textit{Stage II} \cite{9880566}.

\begin{theorem}
   The sub-game perfect equilibrium in the VMUs' subgame is unique \cite{9880566,9860983}. 
\end{theorem}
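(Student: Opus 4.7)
The plan is to establish uniqueness by showing that each VMU's best-response problem in Stage II is a strictly concave maximization in its own decision variable $B_v$, and that the VMUs' subgames are decoupled conditional on the leader's price $P$. Fixing $P$, the utility in \eqref{U_v} can be rewritten by substituting the explicit form of $A_v$ and letting $K:=\log_2\!\bigl(1+\rho h^0 d^{-\varepsilon}/N_0\bigr)$, so that
\begin{equation}
U_v(B_v)=\alpha_v\ln\!\Bigl(1+\tfrac{B_v K}{D_v\lambda}\Bigr)-P B_v.
\end{equation}
This reduction is the first step and makes the problem separable across VMUs, since $U_v$ depends only on $B_v$ (not on $\boldsymbol{B}_{-v}$) once $P$ is fixed.

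Next I would compute the first and second derivatives. The first-order condition $\partial U_v/\partial B_v=0$ gives a closed-form candidate $B_v^\ast(P)=\alpha_v/P-D_v\lambda/K$, and the second derivative
\begin{equation}
\frac{\partial^{2}U_v}{\partial B_v^{2}}=-\frac{\alpha_v K^{2}}{(D_v\lambda+B_v K)^{2}}<0,
\end{equation}
which is strictly negative for all admissible $B_v>0$. Strict concavity implies that the stationary point, when it lies in the feasible interval dictated by the constraints in \eqref{U_r}, is the unique global maximizer; otherwise the unique maximizer is attained at a boundary by monotonicity of a strictly concave function. Either way VMU $v$'s best response $B_v^\ast(P)$ is a singleton.

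Having established a unique best response for each VMU, I would close the argument by appealing to decoupling: because the Stage II subgame splits into $V$ independent single-variable concave programs once $P$ is fixed, the joint best-response profile $\boldsymbol{B}^\ast(P)=\{B_v^\ast(P)\}_{v\in\mathbb{V}}$ is uniquely determined. This is exactly the definition of a unique sub-game perfect equilibrium of the follower subgame for the given leader action, which is what the theorem claims.

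The only subtle step, and the one I expect to require the most care, is handling the feasibility constraints $0<B_v$ and $\sum_v B_v\le B_{max}$: one must verify that the unconstrained maximizer is strictly positive (so that the non-negativity constraint does not bind trivially and eliminate the VMU from the game) and argue that if the aggregate cap were active it would not create a continuum of equilibria. Because the leader's price $P$ can be assumed to lie in the interior range $C\le P\le P_{max}$ specified in \eqref{U_r} and the followers' problems remain separable, strict concavity of each $U_v$ guarantees uniqueness even at the boundary, so this obstacle is navigable rather than fundamental.
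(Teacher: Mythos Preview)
Your proposal is correct and follows essentially the same route as the paper: compute the first- and second-order derivatives of $U_v$ in $B_v$, observe the second derivative is strictly negative, and conclude strict concavity gives a unique best response for each VMU. Your treatment is in fact more careful than the paper's, since you make explicit the separability across VMUs and discuss the boundary/feasibility constraints, whereas the paper simply records the two derivatives and asserts uniqueness.
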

\begin{proof}
    The first-order derivative and the second-order derivative of $U_v$ with respect to $B_v$ are shown as
\begin{equation}
\begin{split}
    \frac{\partial{U_v}}{\partial{B_v}}&=\frac{\alpha_v\log_2\Big(1+\frac{\rho{h^0}d^{-\varepsilon}}{N_0}\Big)}{D_v\lambda+B_v\log_2\big(1+\frac{\rho{h^0}d^{-\varepsilon}}{N_0}\big)}-P,\\
    \frac{\partial^2{U_v}}{\partial{B_v}^2}&=-\frac{\alpha_v\log_2\Big(1+\frac{\rho{h^0}d^{-\varepsilon}}{N_0}\Big)^2}{\big(D_v\lambda+B_v\log_2(1+\frac{\rho{h^0}d^{-\varepsilon}}{N_0})\big)^2}<0.
\end{split}
\end{equation}
Since the first-order derivative of $U_v$ has a unique zero point and the second-order derivative of $U_v$ is negative, the utility function $U_v$ is strictly concave concerning the VMU's bandwidth demand strategy $B_v$. Therefore, the sub-game perfect equilibrium in the VMUs’ subgame is unique.
\end{proof}
Then, we set the first-order derivative of $U_v$ to $0$, and get the best response function $B_v^*$ for VMU $v$, which is given by
\begin{equation}
    B_v^*=\frac{\alpha_v}{P}-\frac{D_v\lambda}{\log_2\big(1+\frac{\rho{h^0}d^{-\varepsilon}}{N_0}\big)}.
    \label{Bv}
\end{equation}

\subsubsection{RSU coalition's optimal strategy as equilibrium in Stage I}
To analyze the existence and uniqueness of the equilibrium of the Stackelberg game, we study the concavity of the utility function of the RSU coalition. By predicting the strategies of VMUs, the RSU coalition plays as the leader to maximize its utility in \textit{Stage I}.

\begin{theorem}
    The uniqueness of the proposed Stackelberg game equilibrium can be guaranteed.
\end{theorem}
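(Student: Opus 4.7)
The plan is to apply backward induction and leverage Theorem~2: for any fixed unit price $P$, the VMUs' subgame already has a unique equilibrium given by the closed-form best response \eqref{Bv}. What remains is to show that the leader's problem in \emph{Stage I}, after substituting this best response into $U_r$, admits a unique maximizer in $P$. Once both stages have unique optimizers, the pair $(\boldsymbol{B}^*,P^*)$ is the unique Stackelberg equilibrium by definition.

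First, I would substitute $B_v^\ast(P)$ from \eqref{Bv} into the objective of \eqref{U_r}, obtaining an expression of the form $U_r(P) = (P-C)\sum_{v}\bigl[\alpha_v/P - D_v\lambda/K\bigr]$, where $K=\log_2(1+\rho h^0 d^{-\varepsilon}/N_0)$ is a positive constant independent of $P$. Expanding yields $U_r(P) = A - PL - AC/P + CL$, with $A=\sum_v \alpha_v > 0$ and $L=\sum_v D_v\lambda/K > 0$. Differentiating twice gives $\partial U_r/\partial P = -L + AC/P^2$ and $\partial^2 U_r/\partial P^2 = -2AC/P^3 < 0$ for every admissible $P>0$, so $U_r$ is strictly concave on the feasible interval $P\in(C,P_{max}]$. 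Strict concavity of a smooth function on an interval implies at most one interior stationary point, and setting the first-order condition to zero produces the unique positive root $P^\ast = \sqrt{AC/L}$.

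The main obstacle will be handling the feasibility constraints of \eqref{U_r} (namely $C<P\le P_{max}$, $B_v>0$ for every $v$, and $\sum_v B_v \le B_{max}$) so that the stationary point $P^\ast$ is the actual maximizer rather than a boundary one. I would discuss two cases: if $P^\ast\in(C,P_{max}]$ and the induced $B_v^\ast(P^\ast)$ satisfy the volume constraints, then $P^\ast$ is the unique global maximizer; otherwise, by strict concavity the maximizer is attained at the unique active boundary ($P=P_{max}$ or a $P$ that saturates $\sum_v B_v=B_{max}$). In either subcase the leader's optimal price is unique, which, combined with the uniqueness of the followers' best response from Theorem~2, yields the uniqueness of $(\boldsymbol{B}^\ast,P^\ast)$. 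I would keep the routine algebra brief and place most of the emphasis on verifying the sign of the second derivative and ruling out multiple boundary optima via concavity.
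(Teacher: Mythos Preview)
Your proposal is correct and follows essentially the same route as the paper: substitute the followers' best response \eqref{Bv} into $U_r$, compute $\partial^2 U_r/\partial P^2 = -\sum_v 2\alpha_v C/P^3 < 0$, and conclude uniqueness from strict concavity together with Theorem~2. In fact your treatment is more careful than the paper's own proof, which stops at the sign of the second derivative and does not discuss the feasibility constraints or boundary cases at all.
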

\begin{proof}
    Based on the optimal bandwidth demand strategies of VMUs, the utility function of the RSU coalition is given by
\begin{equation}
    U_r=\sum_{v=1}^V(P-C)\Bigg(\frac{\alpha_v}{P}-\frac{D_v\lambda}{\log_2\big(1+\frac{\rho{h^0}d^{-\varepsilon}}{N_0}\big)}\Bigg).
\end{equation}
Taking the first-order derivative and second-order derivative of $U_r$ with respect to $P$, we have
\begin{equation}
\begin{split}
    \frac{\partial{U_r}}{\partial{P}}&=\sum_{v=1}^V\Bigg(\frac{\alpha_vC}{P^2}-\frac{D_v\lambda}{\log_2\big(1+\frac{\rho{h^0}d^{-\varepsilon}}{N_0}\big)}\Bigg),\\
    {\frac{\partial^2{U_r}}{\partial{P}^2}}&=-\sum_{v=1}^V\frac{2\alpha_vC}{P^3}<0.
\end{split}
\end{equation}
Similarly, since the first-order derivative of $U_r$ has a unique zero point and the second-order derivative of $U_r$ is negative, the utility function of the RSU coalition is concave, indicating that the RSU coalition has a unique optimal solution. Therefore, the RSU coalition has a unique optimal strategy and the uniqueness of the Stackelberg game's equilibrium is proved. 
\end{proof}

\begin{algorithm}[t]
\caption{Iterative Algorithm for Seeking Stackelberg Equilibrium}
\begin{algorithmic}
    \Require $C, P_{max}$, $B_{max}, \alpha_v,D_v,\forall{v}\in{\Bbb{V}}$.
    \Ensure The optimal pricing strategy $P^*$ and the optimal bandwidth demand strategies $\boldsymbol{B}^*$.
    \State Initialize $U_r^*=0$, $P^*=0$;
    \For{$P=C$ \textbf{to} $P_{max}$}
    \State Calculate $B_v$ based on \eqref{Bv};
    \If {$\sum_{v=1}^V{B_v} \leq B_{max}$}
    \State CALCULATE$(U_r, U_v)$;
    \If{$U_r > U_r^*$ \textbf{and} $U_v > 0$}
    \State Replace $U_r^*$ with $U_r$;
    \State Replace $P^*$ with $P$;
    \EndIf
    \EndIf
    \EndFor
    \State Calculate $\boldsymbol{B}^*$ based on \eqref{Bv};
    \Function {CALCULATE}{$U_r, U_v$}
    \State Calculate the utility of the RSU coalition $U_r$ based on \eqref{U_r};
    \State Calculate the utility of the VMU $U_v$ based on \eqref{U_v};
    \EndFunction
\end{algorithmic}
\end{algorithm}

\begin{table}[t]\label{parameter}
  \begin{center}
    \caption{Key Parameters in the Simulation.}
    \begin{tabular}{l|r} 
    \toprule 
      \textbf{Parameters} & \textbf{Values}\\
      \hline
      Positive interaction frequency/(min) & $[0,100]$  \\
      Negative interaction frequency/(min) & $[0,200]$  \\
      The weight of positive interactions
      $\delta_1$ & $0.5$\\
      The attenuation coefficient $\theta$ & $0.5$ \\
      Reputation threshold $T_{th}^{fin}$ & $0.5$ \\
      Data compression ratio $\lambda$ & $0.5$ \\
      Path-loss exponent $\varepsilon$ & $2$ \\
      Transmitter power of the premigration RSU $\rho$ & $40\rm{dBm}$ \\
      Unit channel power gain $h^0$ & $-20\rm{dB}$ \\
      The average distance between RSUs $d$ & $500\rm{m}$ \\
      Average noise power $N_0$ & $-150\rm{dB}$ \\
      The VT data size of the VMU $D_v$  & $500\rm{MB}$ \\
      The maximum price of unit bandwidth $P_{max}$ & $100$\\
      \bottomrule
    \end{tabular}\label{parameter}
  \end{center}
\end{table}

Motivated by the above analyses, we propose an iterative algorithm to find the Stackelberg equilibrium, as shown in \textbf{Algorithm 2}. The computational complexity of \textbf{Algorithm 2} is $\mathcal{O}\Big(V\big(\frac{P_{max}-C}{\varphi}\big)\Big)$. At first, we initialize the basic parameters. Especially, the optimal strategy $P^*$ and the highest utility of the RSU coalition $U_r^*$ are both initialized as $0$. Then, the selling price of unit bandwidth $P$ is increased by $\varphi$ iteratively and the amount of bandwidth requested by VMUs is calculated in each iteration. If the total amount of bandwidth requested by all VMUs $\sum_{v=1}^VB_v$ does not exceed the maximum amount of bandwidth $B_{max}$, the utilities of VMUs and the RSU coalition can be calculated based on \eqref{U_v} and \eqref{U_r}. When a new optimal value of $U_r$ is found and the utilities of all VMUs are greater than $0$, the value of $U_r$ and the corresponding pricing strategy $P$ of the RSU coalition are recorded. Finally, each VMU can determine its optimal strategy $B_v^*,\:v\in{\Bbb{V}}$ after knowing the final pricing strategy of the RSU coalition.

    \begin{figure}[t]
        \centering{\includegraphics[width=0.45\textwidth]{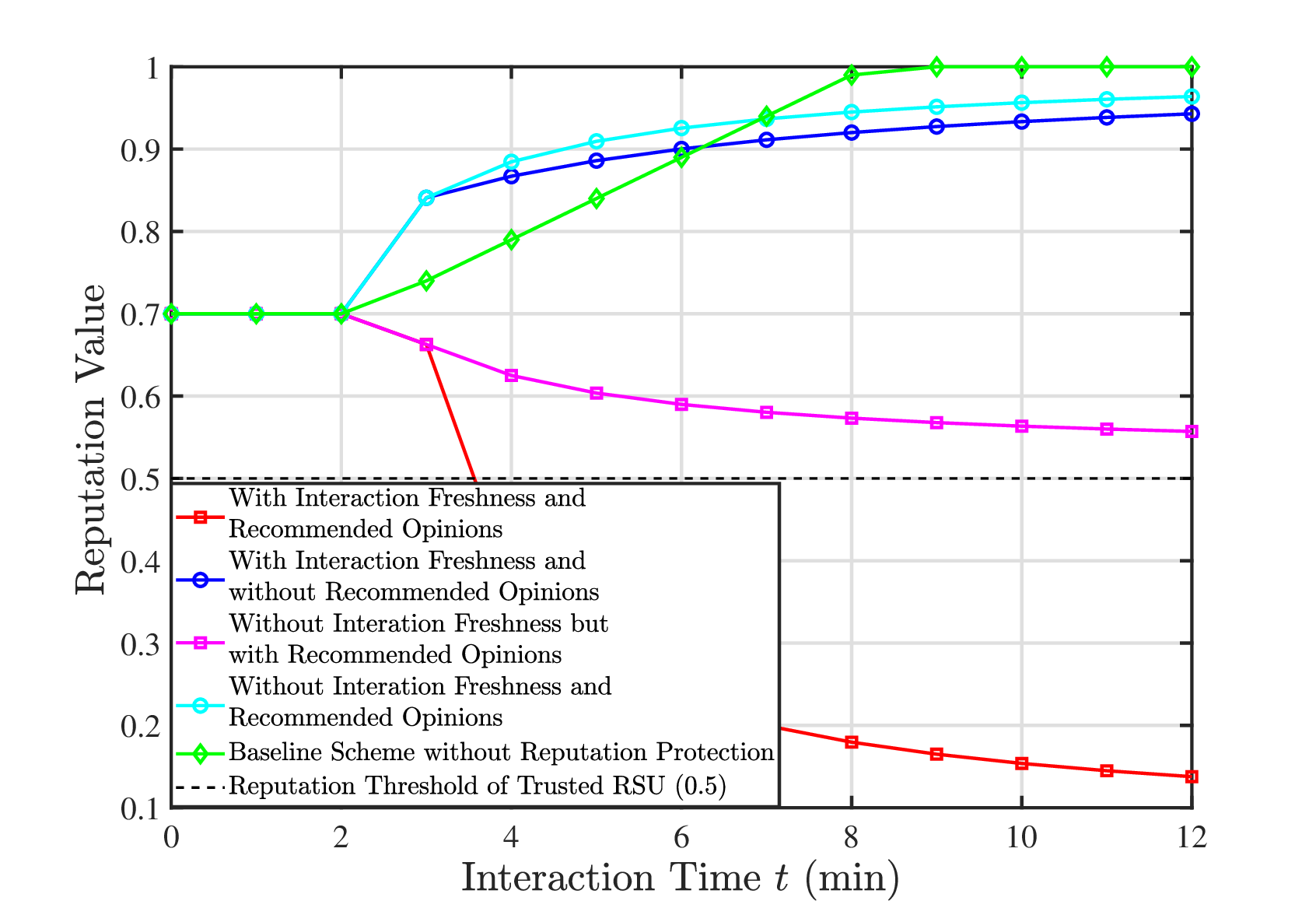}}
        \caption{Reputation value variation of an unreliable RSU.}
        \label{rep_time}
    \end{figure}
    
    \begin{figure}[t]
        \centering
        \includegraphics[width=0.4\textwidth]{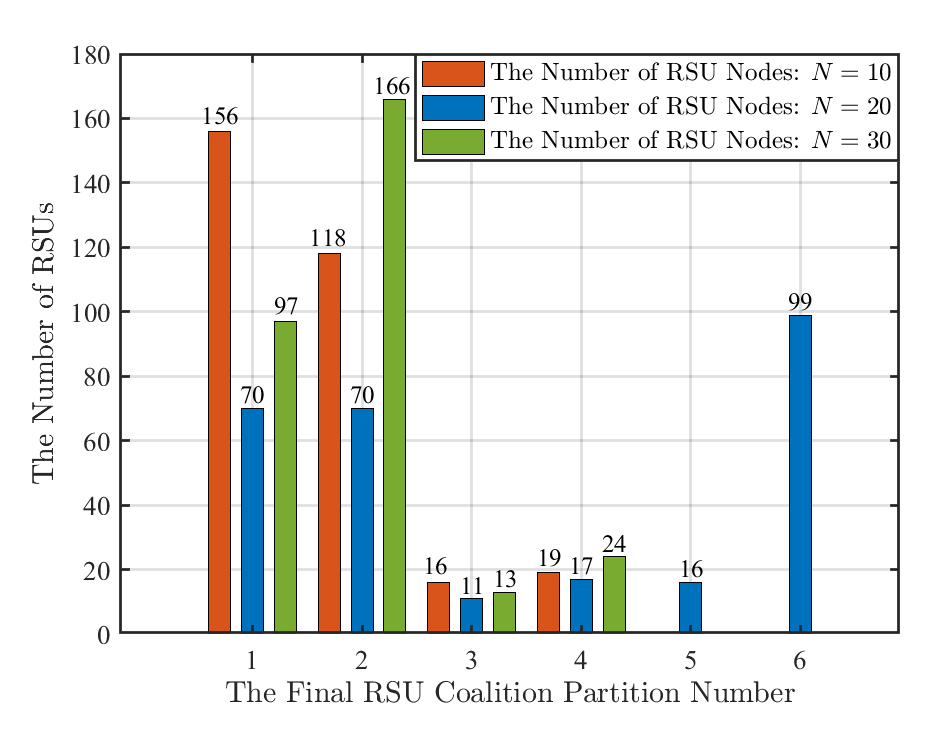}
        \caption{The distribution of the final coalitions under different numbers of RSU nodes.}
        \label{rsu}
    \end{figure}

\section{Numerical Results}\label{Results}
In this section, we present numerical results for blockchain-assisted RSU selection based on the coalition game and the Stackelberg game between the selected RSU coalition and VMUs. Similar to \cite{9880566, Jinbo}, the major parameters are listed in Table \ref{parameter}.

\subsection{Performance of the Proposed Reputation Scheme}
In our proposed reputation scheme, VMUs calculate the reputation of RSUs based on local opinions and recommended opinions. We consider that an unreliable RSU performs well at first to obtain an initialized reputation value of $0.7$, which maintains a good performance within a certain period. Then, the RSU continues to perform well on some VMUs, but poorly on other VMUs with a probability of $90\%$ \cite{9880566}. 
Notably, positive interactions from VMUs enhance the reputation value of the RSU, whereas negative interactions from VMUs diminish the RSU's reputation value.

Figure ~\ref{rep_time} shows the reputation
value changes of an unreliable RSU over time. When the interaction time $t=3$, the RSU begins to misbehave with certain VMUs and gets progressively worse. Note that the effective interaction period $\tau$ is counted from $t=3$. From Fig.~\ref{rep_time}, we can observe that as the unreliable RSU misbehaves over time, the reputation value with interaction freshness and recommended opinions declines quickly below the trusted reputation threshold, which indicates that our proposed reputation scheme can effectively identify unreliable RSUs. Besides, the reputation value decays more rapidly when considering interaction freshness compared to when not considering it.
Therefore, our proposed reputation scheme with interaction freshness can identify unreliable RSUs more efficiently than that without interaction freshness. For the baseline scheme without reputation protection and the reputation scheme without recommended opinions, the reputation value of the unreliable RSU both increases over time \cite{8832210}. The reason is that an unreliable RSU only performs well on several VMUs and performs maliciously on other VMUs. If VMUs interacting with the RSU positively only consider the local opinions without considering the recommended opinions of other VMUs with negative interactions, the RSU will naturally obtain a high reputation value.

\subsection{Numerical Analysis for the Coalition Game}

Figure \ref{rsu} shows the distribution of the final RSU coalitions under different numbers of RSU nodes. We consider that the total number of RSUs is $R=200$. When the final coalitions are certain, the RSU coalition with the highest utility will be selected. From Fig.~\ref{rsu}, we can see that the number of RSUs in each coalition is different when the number of RSU nodes is different. When the number of RSU nodes is $N=20$, there are $6$ final RSU coalitions formed. Similarly, when the number of RSU nodes is $N=10$, there are $4$ final RSU coalitions established. To be specific, the first RSU coalition consists of $156$ RSUs, while the fourth RSU coalition is comprised of $19$ RSUs. Note that regardless of the number of RSU nodes, the total number of RSUs for all coalitions may exceed $200$. The reason is that the composition of RSU nodes has a certain degree of randomness and each RSU can join multiple coalitions, i.e., an RSU can be in more than one coalition.

Figure \ref{fig4} shows the spent time constructing the final RSU coalitions corresponding to the number of RSU nodes under different total numbers of RSUs. From Fig.~\ref{fig4}, we can see that as the number of RSU nodes grows, the time to form the final coalitions through the coalition game increases. In addition, it can be seen that the more the number of RSUs, the larger the time required for constructing the final coalitions in the case of the same number of RSU nodes. When the number of RSU nodes exceeds $20$, the time needed to form the final coalitions with $200$ RSUs is significantly greater compared to the case of $100$ RSUs. To avoid a long time of coalition formation that may affect the quality of vehicular metaverse services to VMUs, machine learning models can be utilized to predict the driving route of vehicles so that RSUs can form coalitions in advance to provide bandwidth resources for VT migration timely. 

Figure \ref{mis_rep} shows the average reputation value corresponding to different misbehavior ratios under different numbers of RSU nodes. The misbehavior ratio is the percentage of RSUs that have negative interactions with VMUs of the total number of RSUs \cite{9880566}. From Fig. \ref{mis_rep}, we can see that with the increase of the misbehavior ratio, the average reputation value of RSUs does not change greatly regardless of the number of RSU nodes. The reason is that with the help of the proposed reputation scheme, the RSUs with low reputation values are excluded from the coalition game. Therefore, RSU nodes can select the RSUs with high reputation values to form coalitions in the coalition game, i.e., our proposed scheme can construct trustworthy RSU coalitions. Then, the RSU coalition with the highest utility will be selected and the maximum amount of bandwidth of the selected RSU coalition will be obtained.
\begin{figure}[t]
\centering
\includegraphics[width=0.4\textwidth]{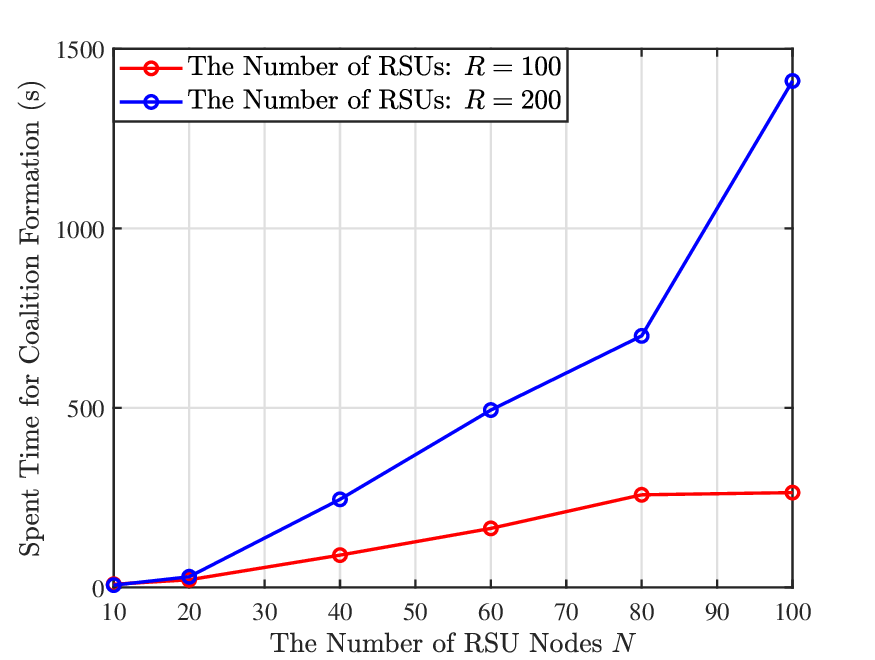}
\caption{Spent time for coalition formation under different numbers of RSUs.}
\label{fig4}
\end{figure}
\begin{figure}[t]
\centering
\includegraphics[width=0.4\textwidth]{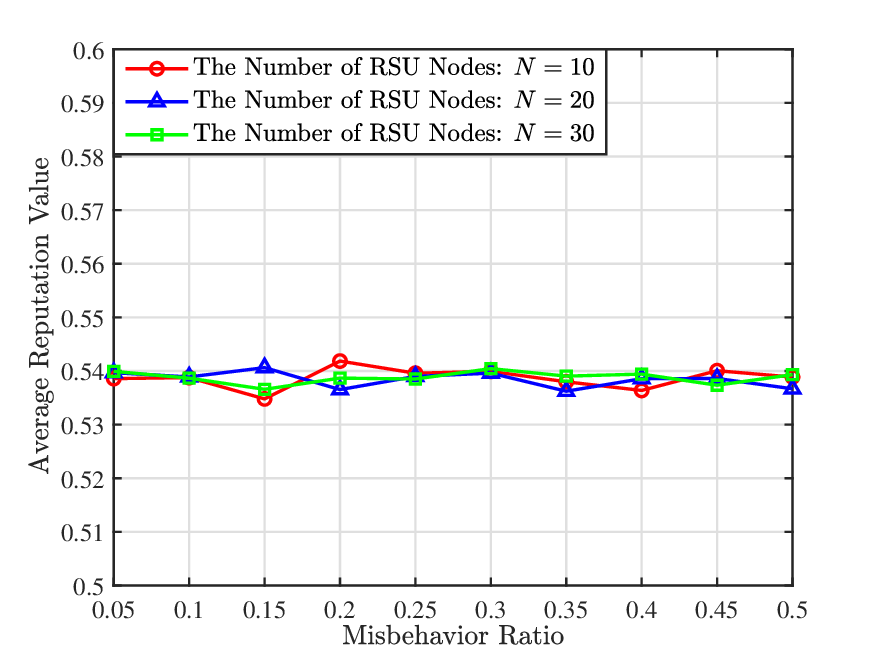}
\caption{The function of the average reputation value with respect to the misbehavior ratio.}
\label{mis_rep}
\end{figure}
\begin{figure}[t]
\centering
\includegraphics[width=0.4\textwidth]{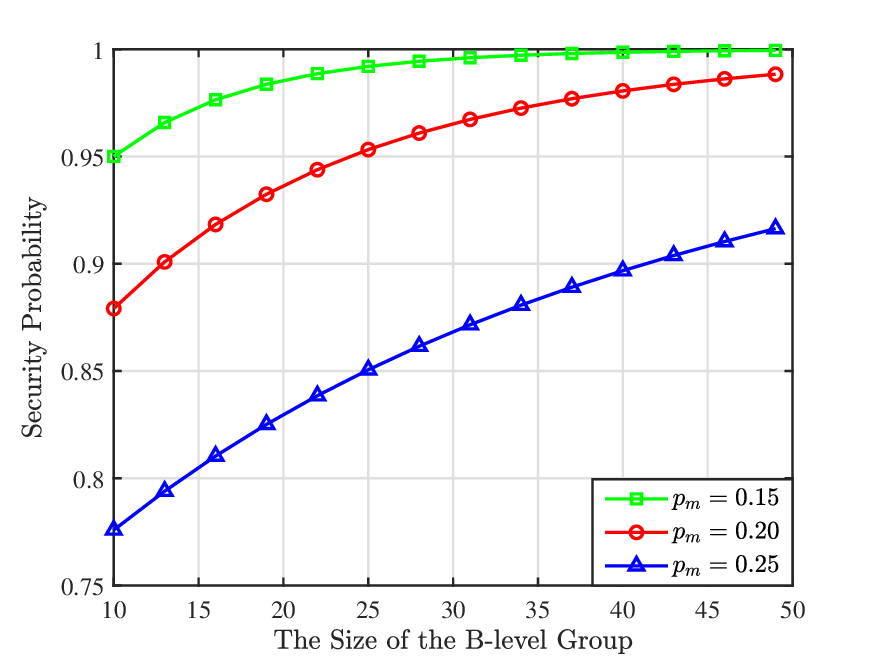}
\caption{Security probability under the different malicious probability of miners.}
\label{blockchain}
\end{figure}

\subsection{Performance of Secure Block Verification}
Figure \ref{blockchain} shows the security performance of the PBFT consensus algorithm in the proposed blockchain system. We consider that the A-level RSU group performs well and model the security performance of the B-level RSU group (i.e., a delegate group) as a random sampling problem with two possible results, namely malicious delegates and well-behaved delegates \cite{lei2020groupchain,qi2021privacy}. According to \cite{lei2020groupchain,li2020scalable}, when the number of malicious delegates is not higher than $(N-1)/3$, where $N$ is the total number of delegates, a block can be verified correctly and truly by the delegates. Therefore, the security probability of a delegate group is $P_{\text{safety}} = \sum_{z=0}^{\lfloor \frac{N}{3}\rfloor}\binom{N}{z}p_m^z(1-p_m)^{N-z}$, where $p_m$ is the probability of a delegate being malicious \cite{li2020scalable}. From Fig. \ref{blockchain}, we can see that regardless of the probability that malicious delegates exist, the security probability increases as the size of the B-level RSU group grows. The reason is that a larger size of the delegate group indicates the increased number of well-behaved delegates participating in the block verification, thereby ensuring security in the consensus process\cite{qi2021privacy}. Therefore, the proposed blockchain system with the PBFT consensus algorithm can ensure reliable VT migration by providing reliable and secure block verification.

\subsection{Numerical Analysis for the Stackelberg Game}

Figure \ref{strategy} shows the impacts of the user-centric parameter $\alpha$ on the optimal strategies of the single VMU and the RSU coalition. Figure ~\ref{vehicle} shows the optimal bandwidth demand strategy of the VMU under different $\alpha$ and $P$ when the cost of unit bandwidth $C=5$. Based on \eqref{U_v}, we can see that the amount of bandwidth purchased by the VMU is mainly affected by the user-centric parameter $\alpha$ and the selling price of unit bandwidth $P$ that the RSU coalition determines. With the increase of $\alpha$, the amount of bandwidth purchased by the VMU is increasing. The reason is that the larger $\alpha$ means that the VMU is more sensitive to the VT migration latency, so more bandwidth is requested to ensure the immersion of vehicular metaverse services. Besides, the higher the selling price of unit bandwidth $P$, the less amount of bandwidth that the VMU would purchase. For instance, when $\alpha=0.5$ and $P$ increase from $10$ to $30$, approximately $76\%$ reduction in bandwidth purchased by the VMU.

Figure ~\ref{RSU} shows the optimal pricing strategy of the RSU coalition under different $\alpha$ and $C$. The selling price of unit bandwidth determined by the RSU coalition is affected by the transmission cost of unit bandwidth $C$ and the user-centric parameter $\alpha$ based on \eqref{U_r}. The reason is that more bandwidth will be purchased by VMUs to ensure immersive metaverse experiences if $\alpha$ is larger. From Fig.~\ref{RSU}, we can see that no matter how much the cost of unit bandwidth $C$ is, the selling price of unit bandwidth $P$ increases with the increase of $\alpha$, and the higher the cost of unit bandwidth $C$, the higher the selling price of unit bandwidth $P$ to ensure the utility of the RSU coalition. For example, when the user-centric parameter $\alpha=0.5$ and the cost of unit bandwidth $C=5$, the RSU coalition sets the selling price of unit bandwidth $P$ at $19.6$ to incentivize VMUs to perform VT migration. When the cost of unit bandwidth $C=10$, the RSU coalition sets the selling price of unit bandwidth $P$ at $27.8$.

Figure \ref{utility} shows the impacts of $\alpha$ and $C$ on the utility of the RSU coalition. we can see that when the cost of unit bandwidth $C$ is fixed, the utility of the RSU coalition rises with the increase of the user-centric parameter $\alpha$, which is because the amount of bandwidth requested by VMUs increases. However, when $\alpha$ is held constant, the utility of the RSU coalition decreases as the cost of unit bandwidth $C$ increases, even though the selling price of unit bandwidth $P$ set by the RSU coalition also increases. This can be attributed to the impact of cost $C$ and the bandwidth demands of VMUs on the utility of the RSU coalition as defined by equation \eqref{U_r}. Thus, it is clear that while the increase in the cost of unit bandwidth $C$ leads to a higher selling price of unit bandwidth $P$, the reduction in bandwidth requests from VMUs has a greater negative effect on the utility of the RSU coalition.

\begin{figure}[t]
\centering
\subfigure[Impacts of $\alpha$ and $P$ on the optimal bandwidth demand strategy of the VMU.]{
\centering
\includegraphics[width=0.4\textwidth]{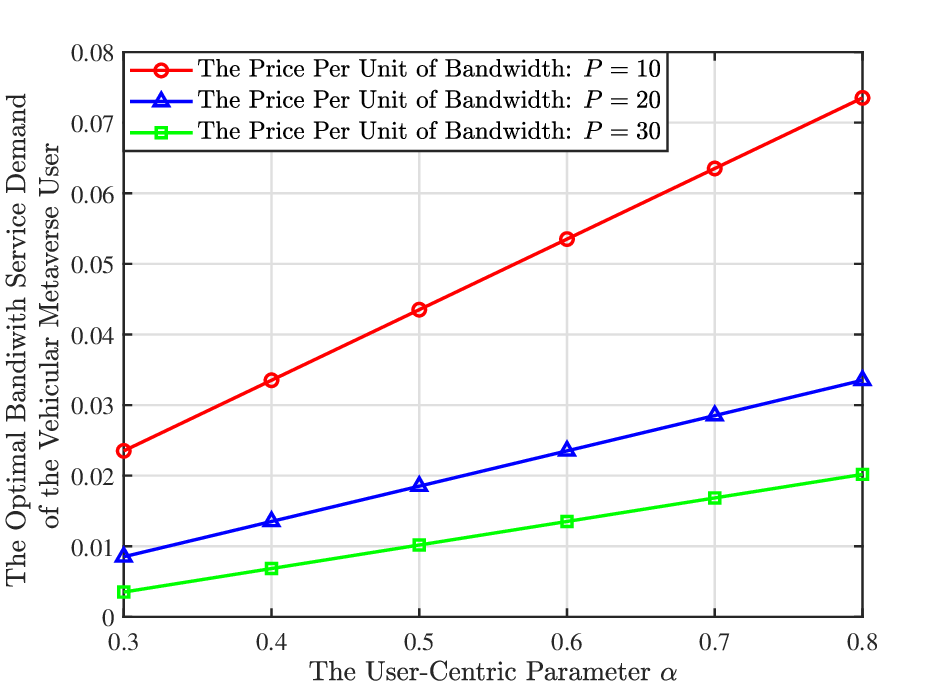}
\label{vehicle}
}
\subfigure[Impacts of $\alpha$ and $C$ on the optimal bandwidth pricing strategy of the RSU coalition.]{
\centering
\includegraphics[width=0.39\textwidth]{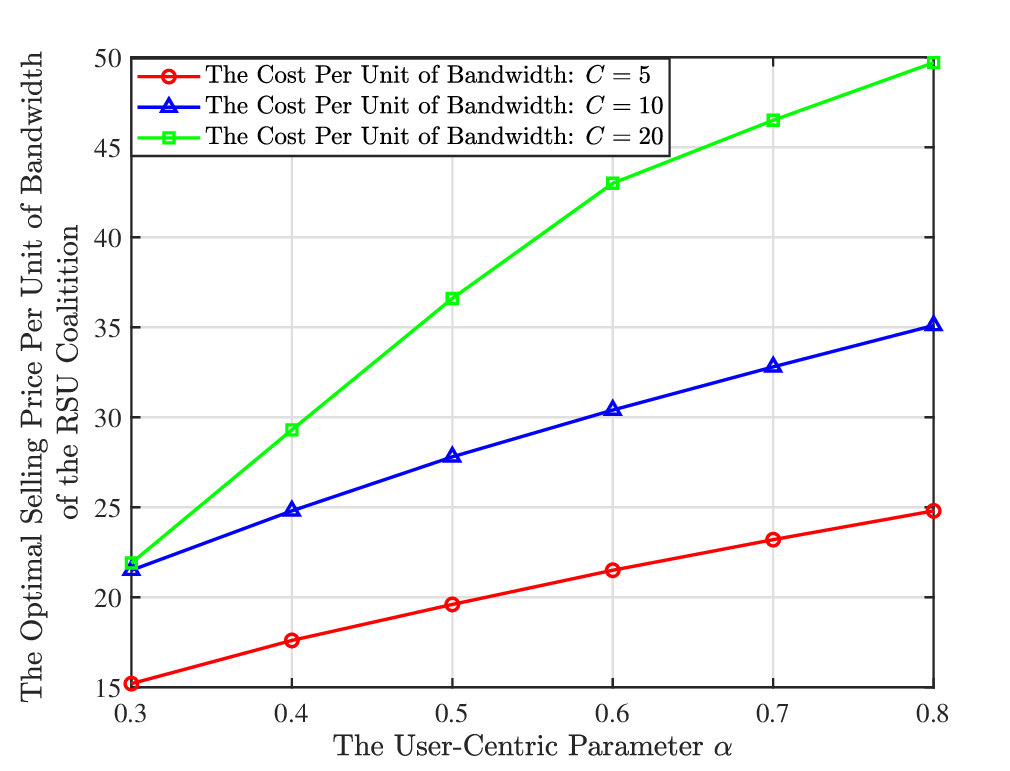}
\label{RSU}
}
\caption{Impacts of key parameters on the optimal strategies of the VMU and RSU coalition.}
\label{strategy}
\end{figure}

\begin{figure}[t]
\centering
\includegraphics[width=0.4\textwidth]{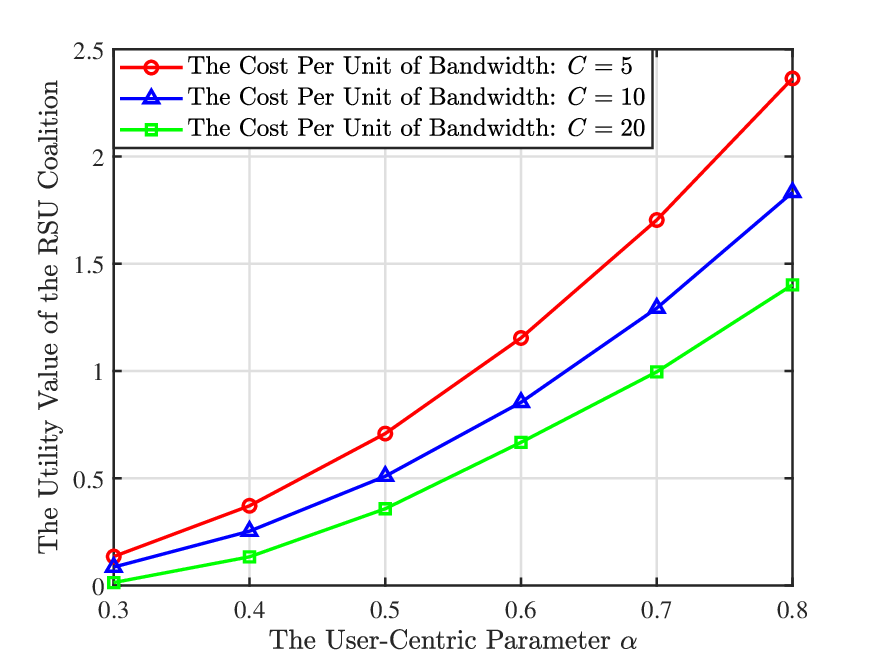}
\caption{Impacts of $\alpha$ and $C$ on the utility of the RSU coalition.}
\label{utility}
\end{figure}

\section{Conclusion and Future Work}\label{Conclusion}
In this paper, we proposed a blockchain-assisted game approach framework for VT migration in vehicular metaverses. To quantify the reliability of RSUs, we calculated the reputation values of RSUs based on the subjective logic model. Especially, we added a time attenuation factor by considering the interaction freshness. Besides, to manage reputation values securely, we proposed a blockchain-assisted reputation rating system, where RSUs as miners are divided into different levels according to their reputation values for lightweight consensus. Based on the reputation values of RSUs, we formulated a coalition game and formed RSU coalitions for reliable and large-scale VT migration. To incentivize VMUs for VT migration, we proposed a single-leader and multi-follower Stackelberg model between the RSU coalition with the highest utility and VMUs. Numerical results demonstrated the reliability and effectiveness of the proposed schemes.

The algorithm used to solve Stackelberg equilibrium can be further improved. Therefore, we will use AI technologies such as deep reinforcement learning to find the Stackelberg equilibrium in the future. Besides, we may construct a new immersion metric to optimize the utility of VMUs based on the characteristics of metaverse and link its computation to the field of psychology, e.g., by taking into account the enjoyment and engagement of VMUs with the services of the metaverse in the computation of immersion. Additionally, we recognize that the calculation formula for service latency can also be further improved, so we plan to enhance the calculation of service latency by utilizing communication-focused formulas for more accurate evaluations. This involves taking into account both the volume of data being transferred by VTs and the statistical distribution of data as it reaches the RSUs.

\section*{Acknowledgments}
The work was supported by National Natural Science Foundation of China (NSFC) under grant No. 62102099, U22A2054, and No. 62101594, and  the Pearl River Talent Recruitment Program under Grant 2021QN02S643, and Guangzhou Basic Research Program under Grant 2023A04J1699, and was also supported by National Natural Science Foundation of China (Grant No. 62071343), Foundation of State Key Laboratory of Public Big Data (No. PBD2023-12), and Collaborative Innovation Center of Novel Software Technology and Industrialization.

\bibliographystyle{IEEEtran}

\bibliography{ref}

\end{document}